\newtheorem{theorem}{Theorem}
\newcommand{\comm}[1]{}
\newcommand{\Oh}{{\ensuremath{\mathcal{O}}}}
\newcommand{\df}{\emph}
\newcommand{\Q}{\ensuremath{\mathcal{Q}}\xspace}
\newcommand{\D}{\ensuremath{\mathcal{D}}\xspace}
\DeclareMathOperator{\gap}{gap}
\DeclareMathOperator{\LogGap}{\ensuremath{LogGap}}
\DeclareMathOperator{\Log}{\ensuremath{Log}}
\newcommand{\BV}{\textsc{BV}\xspace}
\newcommand{\PEF}{\textsc{PEF}\xspace}
\newcommand{\BIC}{\textsc{BIC}\xspace}
\newcommand{\prob}[1]{\text{\textsc{#1}}\xspace}
\newcommand{\MLA}[0]{\prob{MLA}}
\newcommand{\MLogA}[0]{\prob{MLogA}}
\newcommand{\MLogGapA}[0]{\prob{MLogGapA}}
\newcommand{\BiMLogA}[0]{\prob{BiMLogA}}
\newcommand{\NP}[0]{\texttt{NP}}
\newcommand{\APX}[0]{\texttt{APX}}
\title{Compressing Graphs and Indexes with Recursive~Graph~Bisection}
\author[1]{Laxman~Dhulipala}
\author[2]{Igor~Kabiljo}
\author[2]{Brian~Karrer}
\author[2]{Giuseppe~Ottaviano}
\author[2]{Sergey~Pupyrev}
\author[2]{Alon~Shalita}
\affil[1]{Carnegie Mellon University}
\affil[2]{Facebook}
\begin{document}
\date{}

\maketitle
\begin{abstract}
  Graph reordering is a powerful technique to increase the locality of the
  representations of graphs, which can be helpful in several applications.
  %in several applications can help reduce CPU load, cache misses, and network traffic.
  We study how the technique can be
  used to improve compression of graphs and inverted indexes.

  We extend the recent theoretical model of Chierichetti~et~al. (KDD 2009) for
  graph compression, and show how it can be employed for compression-friendly
  reordering of social networks and web graphs and for assigning document
  identifiers in inverted indexes. We design and implement a novel theoretically
  sound reordering algorithm that is based on recursive graph bisection.

  Our experiments show a significant improvement of the compression rate of
  graph and indexes over existing heuristics. The new method is relatively
  simple and allows efficient parallel and distributed implementations,
  which is demonstrated on graphs with billions of vertices and hundreds of
  billions of edges.
\end{abstract}

% The code below should be generated by the tool at
% http://dl.acm.org/ccs.cfm
%\begin{CCSXML}
%<ccs2012>
%    <concept>
%    <concept_id>10002950.10003624.10003633.10010917</concept_id>
%    <concept_desc>Mathematics of computing~Graph algorithms</concept_desc>
%    <concept_significance>500</concept_significance>
%    </concept>
%    <concept>
%    <concept_id>10002951.10003317.10003365.10003367</concept_id>
%    <concept_desc>Information systems~Search index compression</concept_desc>
%    <concept_significance>500</concept_significance>
%    </concept>
%    <concept>
%    <concept_id>10002951.10002952.10002971.10003451.10002975</concept_id>
%    <concept_desc>Information systems~Data compression</concept_desc>
%    <concept_significance>300</concept_significance>
%    </concept>
%</ccs2012>
%\end{CCSXML}

%\ccsdesc[500]{Mathematics of computing~Graph algorithms}
%\ccsdesc[500]{Information systems~Search index compression}
%\ccsdesc[300]{Information systems~Data compression}

% XXX Put back the classification info in the final version
% \printccsdesc
% \keywords{Graph ordering, linear arrangement, graph compression, index compression, social networks, graph algorithms}

\section{Introduction}

Many real-world systems and applications use in-memory representation of indexes
for serving adjacency information in a graph. A popular example is social networks
in which the list of friends is stored for every user. Another example is an inverted
index for a collection of documents that stores, for every term,
the list of documents where the term occurs. Maintaining these indexes requires a compact,
yet efficient, representation of graphs.

How to represent and compress such information? Many techniques for graph and index
compression have been studied in the literature~\cite{WMB99,MP15}.
Most techniques first sort vertex identifiers in an adjacency list, and then replace the identifiers
(except the first) with differences between consecutive ones. The
resulting gaps are encoded using some integer compression algorithm. Note that
using gaps instead of original identifiers decreases the values needed to be
compressed and results in a higher compression ratio. We stress that
the success of applying a particular encoding algorithm strongly depends
on the distribution of gaps in an adjacency list: a sequence of small and regular gaps
is more compressible than a sequence of large and random ones.

This observation has motivated the approach of assigning identifiers
in a way that optimizes compression. \df{Graph reordering} has been
successfully applied for social networks~\cite{CKLMPR09,BRSV11}. In that
scenario, placing similar social actors nearby in the resulting order
yields a significant compression improvement. Similarly, lexicographic locality
is utilized for compressing the Web graph: when pages are ordered by URL,
proximal pages have similar sets of neighbors, which results in an increased
compression ratio of the graph, when compared with the compression obtained
using the original graph~\cite{RSWW02,BV04}.
In the context of index compression, the corresponding approach is called the
\df{document identifier assignment} problem. Prior work shows that for
many collections, index compression can be significantly improved
by assigning close identifiers to similar documents~\cite{SCSC03,Sil07,DAS10,BB05,BB02}.

In this paper, we study the problem of finding the best ``compression-friendly''
order for a graph or an inverted index. While graph reordering and document identifier
assignment are often studied independently, we propose a unified model that
generalizes both of the problems. Although a number of heuristics for the problems
exists, %~\cite{BV04,CKLMPR09,BRSV11,ST11},
none of them provides any guarantees on the
resulting quality. In contrast, our algorithm is inspired by a theoretical approach
with provable guarantees on the final quality, and it is designed to
directly optimize the resulting compression ratio.
Our main contributions are the following.

\begin{compactitem}
    \item We analyze and extend the formal model of graph compression suggested in~\cite{CKLMPR09};
    the new model is suitable for both
    graph reordering and document identifier assignment problems.
    We show that the underlying optimization problem is \NP-hard (thus, resolving an
    open question stated in~\cite{CKLMPR09}), and suggest
    an efficient approach for solving the problem, which is based on approximation
    algorithms for graph reordering.
    %with provable approximation factor
    %(Section~\ref{sect:theory}).

    \item Based on the theoretical result, we develop a practical algorithm for constructing
    compression-friendly vertex orders. The algorithm uses recursive graph
    bisection as a subroutine and tries to optimize a desired objective at every recursion step.
    Our objective corresponds to the size of the graph compressed using delta-encoding.
    The algorithm is surprisingly simple, which allows for efficient parallel and
    distributed implementations.
    % (Section~\ref{sect:alg}).

    \item Finally, we perform an extensive set of experiments on a collection of large real-world graphs,
    including social networks, web graphs, and search indexes. The experiments
    indicate that our new method outperforms the state-of-the-art graph reordering techniques,
    improving the resulting compression ratio. Our implementation is highly scalable
    and is able to process a billion-vertex graph in a few hours.
    %(Section~\ref{sect:exp}).
\end{compactitem}

% we can add this as another motivation for graph reordering:
%While our primary motivation is compressing of
%social networks, Web graphs, and search indexes,
%graph reordering plays an important role in a number of applications. In particular,
%various graph traversal algorithms can be accelerated if the in memory graph layout
%takes advantage of the cache architecture. Improving vertex and edge locality is
%important for fast node/link access operations, and thus can be beneficial
%for generic graph algorithms and applications~\cite{PDL11}.
%Effective use of the processor memory hierarchy is an issue in high performance computing;
%reordering graph vertices can increase the spatial locality of data and improves overall
%cache utilization during on processor finite element calculations~\cite{Zhou10,SH04,HT06}.

The paper is organized as follows. We first discuss existing approaches for
graph reordering, assigning document identifiers, and the most popular
encoding schemes for graph and index representation (Section~\ref{sect:related}).
Then we consider algorithmic aspects of the underlying optimization problem.
We analyze the models for graph compression suggested by Chierichetti~et~al.~\cite{CKLMPR09}
and suggest our generalization in Section~\ref{sect:model}. Next, in Section~\ref{sect:approx},
we examine existing theoretical techniques for the graph reordering problem and
use the ideas to design a practical algorithm. A detailed description of the algorithm
along with the implementation details is presented in Section~\ref{sect:alg}, which
is followed by experimental Section~\ref{sect:exp}. We conclude the paper with the most
promising future directions in Section~\ref{sect:future}.

\section{Related Work}
\label{sect:related}
There exists a rich literature on graph and index compression, that can be roughly
divided into three categories: (1)~structural approaches that find and merge repeating
graph patterns (e.g., cliques), (2)~encoding adjacency data represented by a list of integers
given some vertex/document order, and (3)~finding a suitable order of graph vertices.
Our focus is on the ordering techniques.
%that do not make structural changes of the graph.
We discuss the existing approaches for graph reordering, followed by
an overview of techniques for document identifier assignment.
Since many integer encoding algorithms can benefit from
such a reordering, we also outline the most popular encoding schemes.

\paragraph{Graph Reordering}
Among the first approaches for compressing large-scale graphs is a work by Boldi
and Vigna~\cite{BV04}, who compress web graphs using a \df{lexicographical
  order} of the URLs. Their compression method relies on two properties:
locality (most links lead to pages within the same host) and similarity (pages
on the same host often share the same links).
%Later the method was improved by using Gray codes, that is, an ordering of $2^n$ bit vectors where adjacent
%vectors differ by exactly one bit~\cite{RSWW02,BSV09}. Intuitively, a Gray code induces
%an ordering of a graph vertices so that the corresponding rows in its adjacency matrix
%changes ``slowly'' from row to row. Hence, similar rows with a small number of changes
%appear nearby.
Later Apostolico and Drovandi~\cite{AD09} suggest one of the first ways to compress a
graph assuming no a priori knowledge of the graph. %and relying on a natural vertex ordering.
The technique is based on a \df{breadth-first traversal} of the graph
vertices and achieves a better compression rate using an entropy-based encoding.

Chierichetti et al.~\cite{CKLMPR09} consider the theoretical aspect of the reordering
problem motivated by compressing social networks.
They develop a simple but practical heuristic for the problem, called \df{shingle ordering}.
The heuristic is based on obtaining a fingerprint of the neighbors of a vertex and
positioning vertices with identical fingerprints close to each other.
If the fingerprint can capture locality and similarity of the vertices,
then it can be effective for compression.
This approach is also called \df{minwise hashing} and was originally applied
by Broder~\cite{Bro97} for finding duplicate web pages.

Boldi et al.~\cite{BRSV11} suggest a reordering algorithm, called \df{Layered Label
Propagation}, to compress social networks. The algorithm is built on
a scalable graph clustering technique by label propagation~\cite{RAK07}. The idea
is to assign a label for every vertex of a graph based on the labels of its
neighbors. The process is executed in rounds until no more updates take place. Since
the standard label propagation described in~\cite{RAK07} tends to produce a giant
cluster, the authors of~\cite{BRSV11} construct a hierarchy of clusters. The vertices
of the same cluster are then placed together in the final order.
%The relative order among different clusters is dictated by their labels, and hence, it is essentially
%random in the final result.

The three-step \df{multiscale} paradigm is often employed for the graph ordering problems. %~\cite{KH02,SRB09,ST11}.
First, a sequence of coarsened graphs,
each approximating the original graph but having a smaller size, is created.
Then the problem is solved on the coarsest level by an exhaustive search. Finally, the process is reverted by
an uncoarsening procedure so that a solution for every graph in the sequence is based on
the solution for a previous smaller graph.
%Notice that such an approach requires considering a (vertex- and edge-) weighted variant of the problem.
%Koren and Harel~\cite{KH02} employ dynamic programming together with the local refinement to process a graph.
Safro and Temkin~\cite{ST11} employ the
algebraic multigrid methodology in which the sequence of coarsened graphs is constructed
using a projection of graph Laplacians into a lower-dimensional space.

\df{Spectral} methods have also been successfully applied to
graph ordering problems~\cite{JM92}.
Sequencing the vertices is done by sorting them according to
corresponding elements of the second smallest eigenvector of graph
Laplacian (also called the Fiedler vector).
It is known that the order yields the best non-trivial solution to a
relaxation of the \df{quadratic graph ordering problem}, and hence, it is a good heuristic for
computing linear arrangements.
%The Laplacian $L$ for a graph $G$ is a
%symmetric $n\times n$ matrix with $L_{uv} = \deg(v_u)$ if $u=v$,
%$L_{uv} = -1$ if $u \neq v$ and $(u,v)\in E$, and $L_{uv} = 0$ otherwise.
%Let $x$ be the normalized eigenvector of $L$ corresponding to the second smallest
%eigenvalue. It is known that $x$ is the best non-trivial minimum of the
%following quadratic energy:
%$$E=\sum_{(u,v) \in E} w_{uv}(x_u - x_v)^2 \mbox{ subject to } ||x||_2=1.$$
%Sequencing the vertices is done by sorting them according to their components
%in $x$ (also called the Fiedler vector). Computation of $x$ can be done very efficiently
%using the multi-scale methods; see~\cite{KCH02}. Hence, spectral
%sequencing is a very rapid heuristic for computing linear arrangements.

Recently Kang and Faloutsos~\cite{YKF14} present another technique, called
\df{SlashBurn}. Their method constructs a permutation of graph vertices so that
its adjacency matrix is comprised of a few nonzero blocks. Such dense blocks are
easier to encode, which is beneficial for compression.

In our experiments, we compare our new algorithm with all
of the methods, which are either easy to implement, or come with the
source code provided by the authors.

\paragraph{Document Identifier Assignment}
Several papers study how to assign document identifiers in a document collection
for better compression of an inverted index. A popular idea is to perform a
clustering on the collection and assign close identifiers to similar
documents. Shieh et al.~\cite{SCSC03} propose a reassignment heuristic
motivated by the maximum \df{travelling salesman problem} (TSP). The heuristic computes
a pairwise similarity between every pairs of documents (proportional to the
number of shared terms), and then finds the longest path traversing the documents in
the graph.  An alternative algorithm calculating cosine similarities between
documents is suggested by Blandford and Blelloch~\cite{BB02}. Both methods are
computationally expensive and are limited to fairly small datasets.  The
similarity-based approach is later improved by Blanco and Barreiro~\cite{BB05}
and by Ding et al.~\cite{DAS10}, who make it scalable by reducing the size of
the similarity graph, respectively through dimensionality reduction and locality
sensitive hashing.

The approach by Silvestri~\cite{Sil07} simply sorts the collection of web pages
by their URLs and then assigns document identifiers according to the
order. The method performs very well in practice and is highly scalable.
%Note that the same technique is also used to compress web graphs in~\cite{BV04}.
This technique however does not generalize to document collections that do not have URL-like
identifiers.

\paragraph{Encoding Schemes}%
Our algorithm is not tailored specifically for an encoding scheme; any method
that can take advantage of lists with higher local density (clustering)
should benefit from the reordering. For our experiment we choose a few encoding
schemes that should be representative of the state-of-the-art.

Most graph compression schemes build on \emph{delta-encoding}, that is, sorting
the adjacency lists (\df{posting lists} in the inverted indexes case) so that
the gaps between consecutive elements are positive, and then encoding these gaps
using a variable-length integer code.
%More sophisticated techniques achieve a higher compression ratio.
The WebGraph framework adds the ability to copy portions of the adjacency lists
from other vertices, and has special cases for runs of consecutive
integers. Introduced in 2004 by Boldi and Vigna~\cite{BV04}, it is still widely
used to compress web graphs and social networks.
%We refer to it as BV.

Inverted indexes are usually compressed with more specialized techniques in
order to enable fast skipping, which enables efficient list intersection.  We
perform our experiments with Partitioned Elias-Fano and Binary Interpolative
Coding.
%(which we refer to as, respectively, PEF and BIC).
The former, introduced by Ottaviano and Venturini~\cite{OV14}, provides one of
the best compromise between decoding speed and compression ratio. The latter,
introduced by Moffat and Stuiver~\cite{moffat00binary}, has the highest
compression ratio in the literature, with the trade-off of slower decoding by
several times. Both techniques directly encode monotone lists, without going
through delta-encoding.
%The description of the encoding schemes is quite involved, so we
%refer the reader to the original papers.

\section{Algorithmic Aspects}
\label{sect:theory}
Graph reordering is a combinatorial optimization problem with a goal to
find a linear layout of an input graph so that a certain objective function
(referred to as a \df{cost function} or just \df{cost})
is optimized. A linear layout of a graph $G=(V,E)$ with $n=|V|$ vertices
is a bijection $\pi: V\rightarrow \{1, \dots, n\}$. A layout is also
called an order, an arrangement, or a numbering
of the vertices.
In practice it is desirable that ``similar'' vertices of the graph are
``close'' in $\pi$. This leads to a number of problems that
we define next.

The \df{minimum linear arrangement} (\MLA) problem is to find a layout $\pi$
so that $$\sum_{(u,v) \in E} |\pi(u) - \pi(v)|$$ is minimized. This is
a classical \NP-hard problem~\cite{GJ79}, even when restricted to certain
graph classes.
The problem is \APX-hard under Unique Games Conjecture~\cite{DKSV06},
that is, it is unlikely that an efficient approximation algorithm exists.
Charikar et al.~\cite{CHKR10} suggested the best currently
known algorithm with approximation factor $\Oh(\sqrt{\log n} \log \log n)$;
see~\cite{Pet13} for a survey of results on \MLA.

%interval
%or permutation graphs~\cite{Coh06}. There exists a PTAS when $E=\Oh(n^2)$~\cite{AFK02},
%but the problem is \APX-hard under Unique Games Conjecture~\cite{AMS07,DKSV06}. For general
%graphs, an $\Oh(\log n)$-approximation is possible via graph separators~\cite{RR98}. To
%the best of our knowledge,

A closely related problem is \df{minimum logarithmic arrangement} (\MLogA)
in which the goal is to minimize
$$\sum_{(u,v) \in E} \log |\pi(u) - \pi(v)|.$$
Here and in the following we denote $\log(x) = 1 + \lfloor \log_2(x) \rfloor$, that is,
the number of bits needed to represent an integer~$x$.
The problem is also \NP-hard, and one can show that the
optimal solutions of \MLA and \MLogA are quite different on some graphs~\cite{CKLMPR09}.
In practice a graph is represented in memory as an adjacency list using an encoding scheme;
hence, the gaps induced by consecutive neighbors of a vertex are important for compression.
For this reason, the \df{minimum logarithmic
gap arrangement} (\MLogGapA) problem is introduced. For a vertex $v\in V$
of degree $k$ and an order $\pi$, consider the neighbors $out(v) = (v_1, \dots, v_k)$
of $v$ such that $\pi(v_1) < \dots < \pi(v_k)$. Then the cost
compressing the list $out(v)$ under $\pi$ is related to
$f_{\pi} (v, out(v)) = \sum_{i=1}^{k-1} \log |\pi(v_{i+1}) - \pi(v_i)|$.
\MLogGapA consists in finding an order $\pi$, which minimizes
$$\sum_{v\in V} f_{\pi} (v, out(v)).$$
To the best of our knowledge, \MLogA and \MLogGapA are introduced quite recently
by Chierichetti et al.~\cite{CKLMPR09}. They show that \MLogA is \NP-hard but
left the computational complexity of \MLogGapA open.
Since the latter problem is arguably more important for applications,
we address the open question of complexity of the problem.

\begin{theorem}
\label{thm:hard}
    \MLogGapA is \NP-hard.
\end{theorem}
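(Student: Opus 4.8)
The plan is to prove \NP-hardness of \MLogGapA by a reduction from a known \NP-hard problem, most naturally a variant of \MLA or a Hamiltonicity-type problem. Since \MLogGapA already has a known companion result --- Chierichetti et al. show \MLogA is \NP-hard --- the most promising route is to adapt or reduce from that proof, or to reduce directly from a combinatorial problem whose structure forces the gap-logarithms to behave like an arrangement cost we can control. My first step would be to identify a graph gadget in which the per-vertex gap cost $f_{\pi}(v, out(v))$ is minimized precisely when the neighbors of $v$ land in consecutive (or near-consecutive) positions, since $\log|\pi(v_{i+1})-\pi(v_i)|$ is $0$ exactly when consecutive sorted neighbors are adjacent in the layout (recall $\log 1 = 1 + \lfloor \log_2 1\rfloor = 1$, so one must be careful: even a unit gap costs $1$ bit, and the objective rewards long runs of consecutive integers). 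This rounding convention means the objective is genuinely about clustering neighbors into contiguous blocks, which is the feature I would exploit.

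Concretely, I would reduce from a problem like \textsc{Hamiltonian Path} or from \textsc{Minimum Linear Arrangement} on a restricted graph class, building an instance in which each vertex's optimal encoding cost is achievable if and only if the global layout induces a particular consecutive structure. The key construction step is to design gadgets --- for example, high-degree ``anchor'' vertices or cliques --- whose neighbor sets are forced to occupy contiguous intervals in any low-cost layout, thereby pinning down blocks of positions and reducing the freedom of $\pi$ to a discrete combinatorial choice that encodes the source instance. I would then argue a two-way correspondence: a solution to the source problem yields a layout whose total gap-cost meets a target bound $B$, and conversely any layout achieving cost at most $B$ must respect the gadget structure tightly enough to decode back a valid solution of the source instance.

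The main obstacle I anticipate is the discrete, non-convex behavior of the $\log$-of-gap objective combined with the ceiling/rounding in the definition of $\log$. Unlike \MLA, where the cost is linear in displacement and additive gadget analysis is clean, here the cost of a vertex depends on the \emph{sorted} sequence of its neighbors' positions, so local perturbations can have cascading effects on many gap terms, and the logarithm makes the penalty grow slowly --- meaning I must amplify differences (e.g., by replicating vertices or edges, or by padding with dummy vertices to stretch gaps into higher $\log$-brackets) to create a clean gap between ``good'' and ``bad'' layouts. The crucial quantitative lemma will be a tight accounting showing that any deviation from the intended block structure forces at least one gap to cross a power-of-two boundary, incurring a strictly larger $\log$ term that cannot be compensated elsewhere; establishing this gap-amplification cleanly, so that the target bound $B$ separates yes- and no-instances, is where the real work lies.
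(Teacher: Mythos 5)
Your proposal correctly identifies the most promising source problem --- \MLogA, whose \NP-hardness was already established by Chierichetti et al. --- but it never actually produces a reduction: what you give is a research plan (Hamiltonicity gadgets, anchor vertices, amplification, a target bound $B$) plus a list of anticipated obstacles, and the obstacles you anticipate are precisely the ones the correct construction avoids. The missing idea is \emph{edge subdivision}. Given an instance $G=(V,E)$ of \MLogA, build a bipartite graph $G'$ by replacing each edge $(a,b)\in E$ with a new degree-2 vertex $u$ adjacent to $a$ and $b$. A degree-2 vertex contributes exactly one gap term, $\log|\pi(a)-\pi(b)|$, so the total gap cost over all subdivision vertices is \emph{exactly} the \MLogA objective of $G$ under the induced order of $V$. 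No block-structure gadgets, no replication or padding, and no power-of-two threshold analysis are needed; the rounding convention in $\log(\cdot)$ that worries you is inherited verbatim from \MLogA and never has to be manipulated.

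The remaining work --- which your plan does not address --- is the correspondence between optimal solutions. One shows that an optimal \MLogGapA order of $G'$ may be assumed, without loss of generality, to place all of $V$ before all of $U$ (sliding the $V$-vertices to the left while preserving their relative order can only shrink gaps, both among $V$-vertices' neighbor lists and among $U$-vertices' neighbor lists). Under such a separated order the objective splits into a sum of two independent terms: one depending only on the order of $U$, and one depending only on the order of $V$, the latter equal to the \MLogA cost of $G$. Optimality of the \MLogGapA solution then forces the induced order on $V$ to be optimal for \MLogA, completing the reduction. Without this decomposition step, your proposed ``two-way correspondence with bound $B$'' remains entirely unsubstantiated; as written, the proposal contains no checkable argument, and the route it sketches (gap-amplification across power-of-two boundaries) would be far harder to carry out than the problem requires.
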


\begin{proof}
    We prove the theorem by using the hardness of \MLogA, which is known to be \NP-hard~\cite{CKLMPR09}.
    Let $G=(V,E)$ be an instance of \MLogA. We build a bipartite graph $G'=(V',E')$ by splitting
    every edge of $E$ by a degree-2 vertex. Formally, we add $|E|$ new vertices so that $V'=V \cup U$,
    where $V=\{v_1,\dots,v_n\}$ and $U=\{u_1,\dots,u_m\}$. For every edge $(a,b)\in E$, we have
    two edges in $E'$, that is, $(a,u_i)$ and $(b,u_i)$ for some $1 \le i \le m$.
    Next we show that an optimal solution for \MLogGapA on $G'$ yields an optimal solution for \MLogA on $G$,
    which proves the claim of the theorem.

    Let $R$ be an optimal order of $V'$ for \MLogGapA. Observe that without loss of generality, the vertices of
    $V$ and $U$ are separated
    in $R$, that is, $R=(v_{i_1}, \dots, v_{i_n}, u_{j_1}, \dots, u_{j_m})$.
    Otherwise, the vertices can be reordered so that the total objective is not increased.
    To this end, we ``move'' all the vertices of $V$ to the left of $R$ by preserving their relative order.
    It is easy to see that the gaps between vertices of $V$ and the gaps between vertices of $U$ can only
    decrease.

    Now the cost of \MLogGapA is $$\sum_{v\in V} f_{\pi_u} (v, out(v)) + \sum_{u\in U} f_{\pi_v} (u, out(u)),$$
    where $\pi_u=(u_{j_1}, \dots, u_{j_m})$ and $\pi_v=(v_{i_1}, \dots, v_{i_n})$.
    Notice that the second term of the sum depends only on the order $\pi_v$ of the vertices in $V$, and
    it equals to the cost of \MLogA for graph $G$. Since $R$ is optimal for \MLogGapA, the order
    $\pi_v=(v_{i_1}, \dots, v_{i_n})$ is also optimal for \MLogA.
\end{proof}

Most of the previous works consider the \MLogA problem for graph compression, and
the algorithms are not directly suitable for index compression. Contrarily,
an inverted index is generally represented by a directed graph (with edges from terms to documents),
which is not captured by the \MLogGapA problem. In the following, we suggest a model, which
generalizes both \MLogA and \MLogGapA and better expresses graph and
index compression.

\subsection{Model for Graph and Index Compression}
\label{sect:model}
Intuitively, our new model is a bipartite graph comprising
of \df{query} and \df{data} vertices. A query vertex might correspond to an actor in a
social network or to a term in an inverted index. Data vertices are an actor's friends
or documents containing the term, respectively. The goal is to find a layout of data vertices.

Formally, let $G=(\Q \cup \D, E)$ be an undirected unweighted bipartite
graph with disjoint sets of vertices $\Q$ and $\D$. We denote $|\D|=n$ and $|E|=m$.
The goal is to find a permutation, $\pi$, of data vertices, $\D$, so that the following objective
is minimized:
$$
\sum_{q \in \Q} \sum_{i=1}^{\deg_q-1} \log (\pi(u_{i+1}) - \pi(u_i)),
$$
where $\deg_q$ is the degree of query vertex $q \in \Q$, and $q$'s neighbors are
$\{u_1, \dots, u_{deg_q}\}$ with $\pi(u_1) < \dots < \pi(u_{deg_q})$.
Note that the objective is closely related to minimizing the number of bits needed
to store a graph or an index represented using the delta-encoding scheme. We call
the optimization problem \df{bipartite minimum logarithmic arrangement} (\BiMLogA),
and the corresponding cost averaged over the number of gaps $\LogGap$.

Note that \BiMLogA is different from \MLogGapA in that the latter does not differentiate
between data and query vertices (that is, every vertex is query and data in \MLogGapA),
which is unrealistic in some applications.
It is easy to see that the new problem generalizes both \MLogA and \MLogGapA: to model
\MLogA, we add a query vertex for every edge of the input graph, as in the proof of
Theorem~\ref{thm:hard}; to model \MLogGapA, we add a query for every vertex of the
input graph; see Figure~\ref{fig:reduction}. Moreover, the new approach can be naturally
applied for compressing directed graphs; to this end, we only consider gaps induced by outgoing
edges of a vertex.
Clearly, given an algorithm for \BiMLogA, we can easily solve
both \MLogA and \MLogGapA. Therefore, we focus on this new problem in the next sections.

%Notably, the converse is also true: Given an algorithm for \MLogA, one can derive
%a solution for the new problem, as shown by the next lemma. This might be helpful
%for practical applications, as many previous works provides methods for
%solving \MLogA.

\begin{figure}[htbp]
    \begin{minipage}[b]{0.98\textwidth}
        \begin{subfigure}[t]{.31\textwidth}
            \centering
            \includegraphics[page=1]{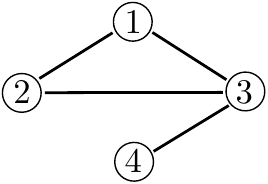}
            \caption{Original graph}
        \end{subfigure}
        \hfill
        \begin{subfigure}[t]{.31\textwidth}
            \centering
            \includegraphics[page=2]{pics/reduction}
            \caption{\MLogA}
        \end{subfigure}
        \hfill
        \begin{subfigure}[t]{.31\textwidth}
            \centering
            \includegraphics[page=3]{pics/reduction}
            \caption{\MLogGapA}
        \end{subfigure}
        \caption{Modeling of \MLogA and \MLogGapA with a bipartite graph with
            query (red) and data (blue) vertices.}
        \label{fig:reduction}
    \end{minipage}
\end{figure}

How can one solve the above ordering problems? Next we discuss the existing
theoretical approaches for solving graph ordering problems. We focus
on approximation algorithms, that is, efficient algorithms for \NP-hard problems
that produce sub-optimal solutions with provable quality.

\subsection{Approximation Algorithms}
\label{sect:approx}

To the best of our knowledge, no approximation algorithms exist for the new variants of the graph ordering
problem. However, a simple observation shows that every algorithm has approximation factor $\Oh(\log n)$.
Note that the cost of a gap between $u \in \D$ and $v \in \D$ in \BiMLogA cannot exceed $\log n$, as $|\pi(v)-\pi(u)| \le n$
for every permutation $\pi$. On the other hand, the cost of a gap is at least $1$. Therefore,
an arbitrary order of vertices yields a solution with the cost, which is at most $\log n$ times greater
than the optimum.

In contrast, the well-studied \MLA does not admit such a simple approximation and requires more involved
algorithms. We observe that most of the existing algorithms for \MLA and related ordering problems
employ the divide-and-conquer approach; see Algorithm~\ref{algo:order}.
Such algorithms partition the vertex set into two sets of roughly equal size, compute recursively an
order of each part, and ``glue'' the orderings of the parts together.
The crucial part is \df{graph bisection} or more generally \df{balanced graph partitioning},
if the graph is split into more than two parts.

\begin{algorithm}[h]
    \caption{Graph Reordering using Graph Bisection}
    \label{algo:order}

    \SetKwInOut{Input}{Input}
    %\SetKwInOut{Output}{Output}
    \Input{graph $G$}
    \BlankLine
    1. Find a bisection $(G_1, G_2)$ of $G$\;
    2. Recursively find linear arrangements for $G_1$ and $G_2$\;
    3. Concatenate the resulting orderings\;
\end{algorithm}

The first non-trivial approximation algorithm for \MLA follows the above approach. Hancen~\cite{Han89}
proves that Algorithm~\ref{algo:order} yields an
$\Oh(\alpha \log n)$-approximation for \MLA, where $\alpha$ indicates how
close is the solution of the first step (bisection of $G$) to the optimum.
Later, Charikar et al.~\cite{CMM07} shows that a tighter analysis is possible, and the algorithm is in fact
$\Oh(\alpha)$-approximation for $d$-dimensional \MLA. Currently, $\alpha=\Oh(\sqrt{\log n})$
is the best known bound~\cite{ARV09}.
Subsequently, the idea of Algorithm~\ref{algo:order} was employed by Even et al.~\cite{ENRS00},
Rao and Richa~\cite{RR98}, and Charikar et al.~\cite{CHKR10} for composing approximation
algorithms for \MLA. The techniques use the recursive
divide-and-conquer approach and utilize a spreading metric by solving a linear program with an
exponential number of constraints.

Inspired by the algorithms, we design a practical approach for the \BiMLogA problem.
While solving a linear program is not feasible for large graphs, we utilize recursive
graph partitioning in designing the algorithm. Next we describe all the steps
and provide implementation-specific details.

\section{Compression-Friendly Graph Reordering}
\label{sect:alg}

Assume that the input is an undirected bipartite graph $G=(\Q \cup \D, E)$,
and the goal is to compute an order of $\D$.
On a high level, our algorithm is quite simple; see Algorithm~\ref{algo:order}.

The reordering method is based on the graph bisection
problem, which asks for a partition of graph vertices into two sets of
equal cardinality so as to minimize an objective function. Given an input
graph $G$ with $|\D|=n$, we apply the bisection algorithm to obtain
two disjoint sets
$V_1,V_2 \subseteq \D$ with $|V_1|=\lfloor n/2 \rfloor$ and $|V_2|=\lceil n/2 \rceil$.
We shall lay out $V_1$ on the set $\{1, \dots, \lfloor n/2 \rfloor\}$ and
lay out $V_2$ on the set $\{\lceil n/2 \rceil, \dots, n\}$.
Thus, we have divided the problem into two problems
of half the size, and we recursively compute good layouts for the graphs induced by $V_1$
and $V_2$, which we call $G_1$ and $G_2$, respectively. Of course, when there
is only one vertex in $G$, the order is trivial.
% introduce n_1, n_2, m

How to bisect the vertices of the graph? We use a graph bisection method, similar to the
popular Kernighan-Lin heuristic~\cite{KL70}; see Algorithm~\ref{algo:bp}.
Initially we split $\D$ into two sets, $V_1$ and $V_2$, and define a computational cost of the partition,
which indicates how ``compression-friendly'' the partition is. Next we exchange pairs
of vertices in $V_1$ and $V_2$ trying to improve the cost. To this end
we compute, for every vertex $v \in \D$, the \df{move gain}, that is, the
difference of the cost after moving $v$ from its current set to another one.
Then the vertices of $V_1$ ($V_2$) are sorted
in the decreasing order of the gains to produce list $S_1$ ($S_2$). Finally,
we traverse the lists $S_1$ and $S_2$ in the order and exchange
the pairs of vertices, if the sum of their move gains is positive.
Note that unlike classical graph bisection heuristics~\cite{KL70,FM82},
we do not update move gains after every swap.
The process is repeated until the convergence criterion is met (no swapped vertices)
or the maximum number of iterations is reached.

\begin{algorithm}[h]
    \caption{Graph Bisection}
    \label{algo:bp}

    \SetKwInOut{Input}{Input}
    \SetKwInOut{Output}{Output}
    \Input{graph $G=(\Q \cup \D, E)$}
    \Output{graphs $G_1=(\Q \cup V_1, E_1),G_2=(\Q \cup V_2, E_2)$}
    \BlankLine
    determine an initial partition of $\D$ into $V_1$ and $V_2$\;
    \Repeat{converged {\bf or} iteration limit exceeded}{
        \For{$v \in \D$}{$gains[v] \leftarrow ComputeMoveGain(v)$}
        $S_1 \leftarrow $ sorted $V_1$ in descending order of $gains$\;
        $S_2 \leftarrow $ sorted $V_2$ in descending order of $gains$\;
        \For{$v \in S_1$, $u \in S_2$}{
            \If{$gains[v]+gains[u] > 0$}{
                exchange $v$ and $u$ in the sets\;}
            \lElse{break}
        }
    }
    \Return{graphs induced by $\Q \cup V_1$ and $\Q \cup V_2$}
\end{algorithm}

To initialize the bisection, we consider the following two alternatives.
A simpler one is to arbitrarily split $\D$ into two equal-sized sets.
Another approach is based on shingle ordering (minwise hashing) suggested in~\cite{CKLMPR09}.
To this end, we order the vertices as described in~\cite{CKLMPR09} and assign
the first $\lfloor n/2 \rfloor$ vertices to $V_1$ and the last $\lceil n/2 \rceil$
to $V_2$.
%Section~\ref{sect:exp} describes the effect of the initialization step on the final quality.

Algorithm~\ref{algo:bp} tries to minimize the following objective function of the
sets $V_1$ and $V_2$, which is motivated by \BiMLogA.
For every vertex
$q \in \Q$, let $\deg_1(q) = |\{(q, v): v \in V_1\}|$, that is, the number of
adjacent vertices in set $V_1$; define $\deg_2(q)$ similarly. Then
the cost of the partition is
$$
\sum_{q \in \Q}\left( \deg_1(q) \log(\frac{n_1}{\deg_1(q) + 1}) +
\deg_2(q) \log(\frac{n_2}{\deg_2(q) + 1}) \right),
$$
where $n_1=|V_1|$ and $n_2=|V_2|$.
The cost estimates the required number of bits needed to represent $G$ using delta-encoding.
If the neighbors of $q \in \Q$ are uniformly distributed in
the final arrangement of $V_1$ and $V_2$, then the the average gap between consecutive numbers in the $q$'s
adjacency list is $\gap_1 \coloneqq n_1 / (\deg_1(q) + 1)$ and
$\gap_2 \coloneqq n_2 / (\deg_2(q) + 1)$ for $V_1$ and $V_2$, respectively; see Figure~\ref{fig:q-gap}.
There are $(\deg_1(q) - 1)$ gaps between vertices in $V_1$ and $(\deg_2(q) - 1)$
gaps between vertices in $V_2$.  Hence, we need approximately
$(\deg_1(q) - 1) \log(\gap_1) + (\deg_2(q) - 1) \log(\gap_2)$
bits to compress the within-group gaps.  In addition, we have to account for the average
gap between the last vertex of $V_1$ and the first vertex of $V_2$, which is
$(\gap_1 + \gap_2)$.
Assuming that $n_1 = n_2$, we have
$\log(\gap_1 + \gap_2) = \log(\gap_1) + \log(\gap_2)+ C$,
where $C$ is a constant with respect to the data vertex assignment, and hence,
it can be ignored in the optimization.
Adding this between-group contribution to the within-group contributions gives the above expression.

\begin{figure}[!h]
    \centering
    \includegraphics[width=0.24\textwidth]{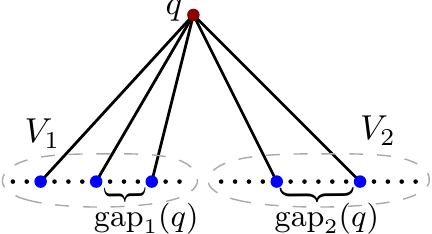}
    \caption{Partitioning $\D$ into $V_1$ and $V_2$
        for a query $q \in \Q$ with $\deg_1(q)=3$ and $\deg_2(q)=2$.}
    \label{fig:q-gap}
\end{figure}%[htb]

Note that using the cost function, it is straightforward to implement
$ComputeMoveGain(v)$ function from Algorithm~\ref{algo:bp} by traversing all
the edges $(q, v) \in E$ for $v\in \D$ and summing up the cost differences of
moving $v$ to another set.

Combining all the steps of Algorithm~\ref{algo:order} and Algorithm~\ref{algo:bp},
we have the following claim.

\begin{theorem}
    The algorithm produces a vertex order in $\Oh(m \log n + n \log^2 n)$ time.
\end{theorem}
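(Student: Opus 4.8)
The plan is to analyze the running time of Algorithm~\ref{algo:order} by counting the work done at each level of the recursion tree, and then summing over all $\Oh(\log n)$ levels. Since each call to Algorithm~\ref{algo:order} splits $\D$ into two halves of size $\lfloor n/2 \rfloor$ and $\lceil n/2 \rceil$, the recursion tree has depth $\Oh(\log n)$, and the total number of data vertices across all subproblems at any fixed level is exactly $n$. The key is therefore to bound the cost of the bisection step (Algorithm~\ref{algo:bp}) at a single node of the recursion tree and show that the per-level total is $\Oh(m + n \log n)$.

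First I would account for the two components of the bisection cost at a node handling a subproblem with $n'$ data vertices and $m'$ incident edges. The sorting steps produce $S_1$ and $S_2$ by sorting the two halves in decreasing order of gains; this costs $\Oh(n' \log n')$ per iteration, and with a constant bound on the number of iterations (the \emph{iteration limit}), this contributes $\Oh(n' \log n')$ to the node. The \emph{move-gain computation} iterates, for each $v \in \D$, over all incident edges $(q,v)$, so $ComputeMoveGain$ over all vertices touches each edge $\Oh(1)$ times, giving $\Oh(m')$ per iteration and hence $\Oh(m')$ per node. The swap loop traverses $S_1$ and $S_2$ linearly, costing $\Oh(n')$, which is dominated by the sort. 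Thus a single node costs $\Oh(m' + n' \log n')$.

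Next I would sum across a fixed level of the recursion. At a given level the subproblems partition $\D$, so the sizes $n'$ sum to $n$ and the edge counts $m'$ sum to at most $m$ (each edge $(q,v)$ appears in exactly the one subproblem containing $v$). Since $\log n' \le \log n$ for every subproblem, the sorting contribution over the level is $\sum n' \log n' \le (\log n) \sum n' = \Oh(n \log n)$, and the move-gain contribution is $\sum m' \le m$, giving $\Oh(m)$ per level. Multiplying the per-level bound $\Oh(m + n \log n)$ by the $\Oh(\log n)$ levels yields $\Oh(m \log n + n \log^2 n)$, as claimed.

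The main obstacle I anticipate is justifying the two implicit assumptions needed for the bound to hold: that the number of Kernighan--Lin iterations per bisection is bounded by a constant (otherwise each node's cost carries an extra iteration factor), and that the move-gain edges are correctly charged to the subproblem containing the data endpoint so that the $\sum m' \le m$ accounting is valid. Because the query set $\Q$ is shared by both induced subgraphs $G_1$ and $G_2$, one must be careful that an edge is counted in only one recursive branch---which holds precisely because each data vertex $v$ lands in exactly one of $V_1, V_2$, and $ComputeMoveGain(v)$ traverses only edges incident to that $v$. Handling the iteration count cleanly---either by fixing it as a parameter or absorbing it into the constant---is the step where the stated bound quietly depends on an implementation choice, so I would make that assumption explicit.
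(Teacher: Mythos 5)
Your proof is correct and follows essentially the same route as the paper's: per-bisection costs of $\Oh(m')$ for move gains and $\Oh(n'\log n')$ for sorting, summed over the subproblems at each of the $\Oh(\log n)$ recursion levels. You are in fact more careful than the paper, which leaves implicit both the constant bound on the iteration limit and the per-level edge-charging argument that you spell out.
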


\begin{proof}
There are $\lceil \log n \rceil$ levels of recursion. Each call of graph bisection
requires computing move gains and sorting of $n$ elements. The former can be done
in $\Oh(m)$ steps, while the latter requires $\Oh(n \log n)$ steps. Summing over all
subproblems, we get the claim of the theorem.
%% assuming the constant number of iterations
\end{proof}

\subsection{Implementation}
Due to the simplicity of the algorithm, it can be efficiently implemented in parallel
or distributed manner. For the former, we notice that two different recursive calls
of Algorithm~\ref{algo:order} are independent, and thus, can be executed in parallel.
Analogously, a single bisection procedure can easily be parallelized, as
each of its steps computes independent values for every vertex, and a parallel
implementation of sorting can be used.
In our implementation, we employ the fork-join computation model in which small enough
graphs are processed sequentially, while larger graphs which occur on the first few
levels of recursion are solved in parallel manner.

Our distributed implementation relies on the vertex-centric programming model
and runs in the Giraph framework\footnote{http://giraph.apache.org}. %~\cite{Chi11}
In Giraph, a computation is split into supersteps that consists
of processing steps: (i)~a vertex executes a user-defined function based on
local vertex data and on data from adjacent vertices, (ii)~the resulting output
is sent along outgoing edges. Supersteps end with a synchronization barrier, which
guarantees that messages sent in a given superstep are received at the beginning
of the next superstep. The whole computation is executed iteratively for a
certain number of rounds, or until a convergence property is met.

Algorithm~\ref{algo:bp} is implemented in the vertex-centric model with
a simple modification. The first two supersteps compute move gains for all data vertices.
To this end, every query vertex calculates the differences of the cost function
when its neighbor moves from a set to another one. Then, every data vertex
sums up the differences over its query neighbors.
Given the move gains, we exchange the vertices as follows. Instead of sorting the move gains,
we construct, for both sets, an approximate histogram (e.g., as described in~\cite{YE10})
of the gain values.
Since the size of the histograms is small enough, we collect the data on a dedicated
host, and decide how many vertices from each bin should exchange its set. On the
last superstep, this information is propagated over all data vertices and the corresponding
swaps take effect.

\section{Experiments}
\label{sect:exp}
We design our experiments to answer two primary questions:
(i)~How well does our algorithm compress graphs and indexes in comparison with
existing techniques? (ii)~How do various parameters of the algorithm
contribute to the solution, and what are the best parameters?

\subsection{Dataset}
For our experiments, we use several publicly available web graphs,
social networks, and inverted document indexes; see Table~\ref{table:dataset}.
In addition, we run evaluation on
two large subgraphs of the Facebook friendship graph and a sample of the Facebook search index.
These private datasets serve to demonstrate scalability of our approach.
We do not release the datasets and our source code due to corporate restrictions.
Before running the tests, all the graphs are made unweighted and converted to bipartite
graphs as described in Section~\ref{sect:model}. Our dataset is as follows.

\begin{compactitem}
    \item \texttt{Enron} represents an email communication network;
    data is available at \url{https://snap.stanford.edu/data}.

    \item \texttt{AS-Oregon} is an Autonomous Systems peering information inferred from Oregon
    route-views in 2001; data is available at \url{https://snap.stanford.edu/data}.

    \item \texttt{FB-NewOrlean} contains a list of all of the user-to-user links from the Facebook
    New Orleans network; the data was crawled and anonymized in 2009~\cite{VMCG09}.

    \item \texttt{web-Google} represents web pages with hyperlinks between them.
    The data was released in 2002 by Google; data is available at \url{https://snap.stanford.edu/data}.

    \item \texttt{LiveJournal} is an undirected version of the public social graph (snapshot from 2006) containing
    $4.8$ million vertices and $42.9$ million edges~\cite{UB13}.

    \item \texttt{Twitter} is a public graph of tweets, with about $41$ million vertices (twitter accounts) and $2.4$
    billion edges (denoting followership)~\cite{KCHM10}.

  \item \texttt{Gov2} is an inverted index built on the TREC 2004 Terabyte Track
    test collection, consisting of 25 million .gov sites crawled in early 2004.
    %the documents are truncated to 256 kB.

  \item \texttt{ClueWeb09} is an inverted index built on the ClueWeb 2009 TREC
    Category B test collection, consisting of 50 million English web pages
    crawled in 2009.

  \item \texttt{FB-Posts-1B} is an inverted index built on a sample of one billion
    Facebook posts, containing the longest posting lists. Since the
    posts have no hierarchical URLs, the
    \texttt{Natural} order for this index is random.

  \item \texttt{FB-300M} and \texttt{FB-1B} are two subgraphs of the Facebook friendship graph;
  the data was anonymized before processing.

\end{compactitem}

To build the inverted indexes for \texttt{Gov2} and \texttt{ClueWeb09} the body
text was extracted using Apache Tika\footnote{http://tika.apache.org} and the
words were lowercased and stemmed using the Porter2 stemmer; no stopwords were
removed. We consider only long posting lists containing more than $4096$ elements.
%since short lists are the least compressible, but because of their small size
%they do not need to be stored in main memory as they can just be loaded from
%flash or disk with negligible extra latency.

\begin{table}[h]
\vspace{-0.5cm}
\small
    \centering
    \begin{tabular}{lrrr}\toprule
        \centering Graph & \multicolumn{1}{c}{$|\Q|$} & \multicolumn{1}{c}{$|\D|$} & \multicolumn{1}{c}{$|E|$} \\
        \midrule
        \texttt{Enron}	  	  & $9,\!660$   & $9,\!660$   & $224,\!896$ \\
        \texttt{AS-Oregon}	  & $13,\!579$  & $13,\!579$   & $74,\!896$ \\
        \texttt{FB-NewOrlean} & $63,\!392$  & $63,\!392$   & $1,\!633,\!662$ \\
        \texttt{web-Google}	  & $356,\!648$  & $356,\!648$   & $5,\!186,\!648$ \\
        \texttt{LiveJournal}  & $4,\!847,\!571$  & $4,\!847,\!571$   & $85,\!702,\!474$ \\
        \texttt{Twitter}	  & $41,\!652,\!230$  & $41,\!652,\!230$   & $2,\!405,\!026,\!092$ \\
        \texttt{Gov2}	   	  & $39,\!187$  & $24,\!618,\!755$ & $5,\!322,\!924,\!226$ \\
        \texttt{ClueWeb09}	  & $96,\!741$  & $50,\!130,\!884$   & $14,\!858,\!911,\!083$ \\
        \texttt{FB-Posts-1B}  & $60 \times 10^3$ & $1 \times 10^9$ & $20 \times 10^9$ \\
        \texttt{FB-300M}	  & $300 \times 10^6$  	& $300 \times 10^6$	 & $90 \times 10^9$ \\
        \texttt{FB-1B}	  	  & $1 \times 10^9$ & $1 \times 10^9$ & $300 \times 10^9$ \\
        \bottomrule
    \end{tabular}
    \caption{Basic properties of our dataset.}
    \label{table:dataset}
\end{table}

\subsection{Techniques}
We compare our new algorithm (referred to as \texttt{BP}) with the following competitors.

\begin{compactitem}
    \item \texttt{Natural} is the most basic order defined for a graph. For
    web graphs and document indexes, the order is the URL lexicographic ordering
    used in~\cite{RSWW02,BV04}.
    For social networks, the order is induced by the original adjacency matrix.

    \item \texttt{BFS} is given by the bread-first search graph traversal algorithm as utilized
    in~\cite{AD09}.

  \item \texttt{Minhash} is the lexicographic order of $10$ minwise hashes of
    the adjacency sets. The same approach with only $2$ hashes is called
    \df{double shingle} in~\cite{CKLMPR09}.

    \item \texttt{TSP} is a heuristic for document reordering suggested by
    Shieh et al.~\cite{SCSC03}, which is based on solving the maximum travelling
    salesman problem.
    %The heuristic can be naturally adopted for reordering arbitrary graphs.
    We implemented the variant of the algorithm that performs
    best in the authors' experiments. Since the algorithm is computationally expensive, we run
    it on small instances only. The sparsification techniques presented in~\cite{BB05,DAS10} would
    allow us to scale to the larger graphs, but they are too complex to re-implement
    faithfully.
%    Because of the superquadratic nature
%    of the algorithm we were able to run it only on small instances; the
%    sparsification techniques presented in \cite{BB05,DAS10} would allow us to
%    scale to the larger graphs, but the authors did not release the source code
%    and the algorithms are too complex to reimplement them in a faithful way, so
%    a fair comparison is not possible. We note however that for the graphs we
%    are able to process with the original technique we should get a reordering
%    quality comparable or better than the sparsified techniques.

    \item \texttt{LLP} represents an order computed by the Layered Label Propagation
    algorithm~\cite{BRSV11}.

    \item \texttt{Spectral} order is given by the second smallest
    eigenvector of the Laplacian matrix of the graph~\cite{JM92}.

    \item \texttt{Multiscale} is an algorithm based on the multi-level algebraic methodology
    suggested for solving \MLogA~\cite{ST11}.

    \item \texttt{SlashBurn} is a method for matrix reordering~\cite{YKF14}.
\end{compactitem}

\subsection{Effect of BP parameters}
\texttt{BP} has a number of parameters that can affect its quality and performance.
In the following we discuss some of the parameters and explain our choice of their
default values.

An important aspect of \texttt{BP} is how two sets, $V_1$ and
$V_2$, are initialized in Algorithm~\ref{algo:bp}. Arguably the initialization
procedure might affect the quality of the final vertex order. To verify the hypothesis,
we implemented four initialization techniques that bisect a given graph:
\texttt{Random}, \texttt{Natural}, \texttt{BFS}, and \texttt{Minhash}.
The techniques order the data vertices, $\D$, using the corresponding
algorithm, and then split the order into two sets of equal size.
In the experiment, we intentionally consider only the simplest and most efficient
bisection techniques so as to keep the complexity of the \texttt{BP} algorithm low.
Figure~\ref{fig:init} illustrates the effect
of the initialization methods for graph bisection.
Note that initialization plays a role to some extent, and there is no consistent
winner. \texttt{BFS} is the best initialization for three of the graphs but
does not produce an improvement on the indexes. One explanation is that the indexes
contain high-degree query vertices, that make the \texttt{BFS} order
essentially random. Overall, the difference between the final
results is not substantial, and even the worst initialization yields
better orders than the alternative algorithms do.
Therefore, we utilize the simplest approach, \texttt{Random}, for graphs
and \texttt{Minhash} for indexes as the default technique
for bisection initialization.

\begin{figure}[htbp]
    \begin{minipage}[b]{0.48\textwidth}
    \centering
    \includegraphics[width=0.99\textwidth]{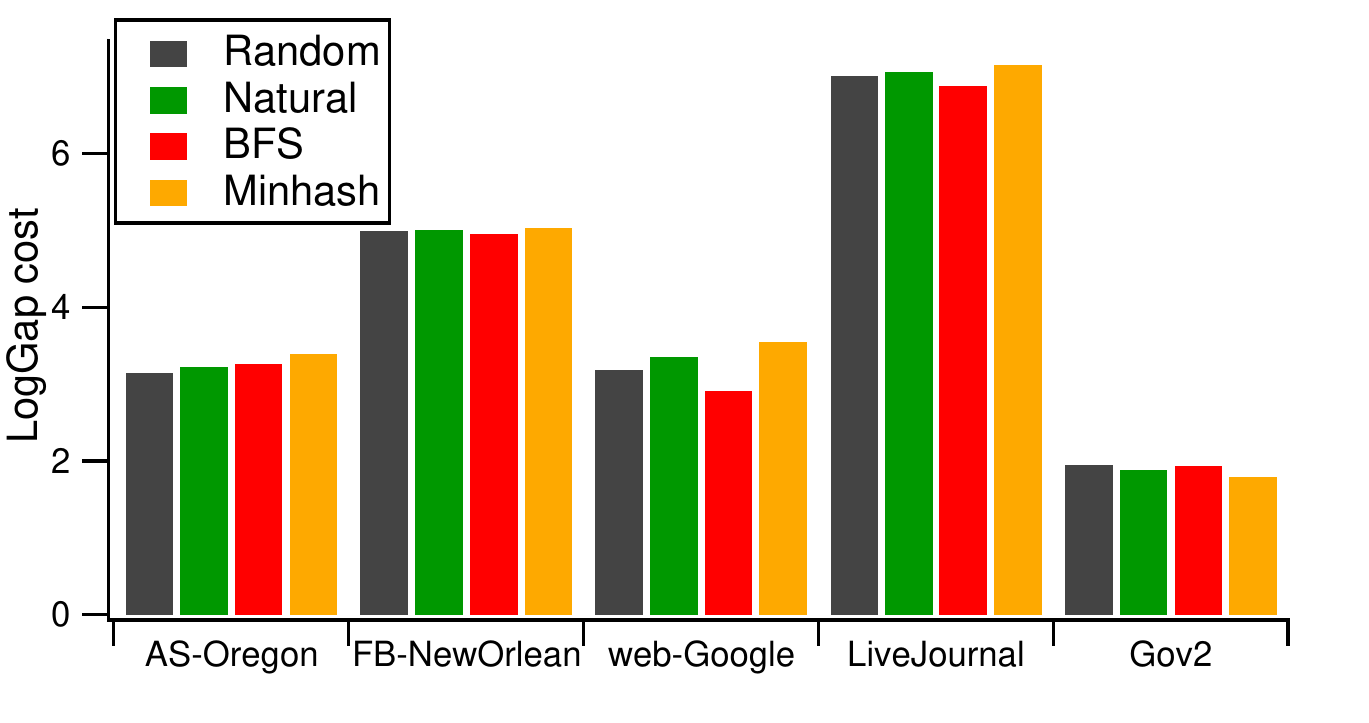}
    \caption{$\LogGap$ cost of the resulting order produced with different
        initialization approaches for graph bisection.}
    \label{fig:init}
    \end{minipage}
    \hfill
    \begin{minipage}[b]{0.48\textwidth}
    \centering
    \includegraphics[width=0.99\textwidth]{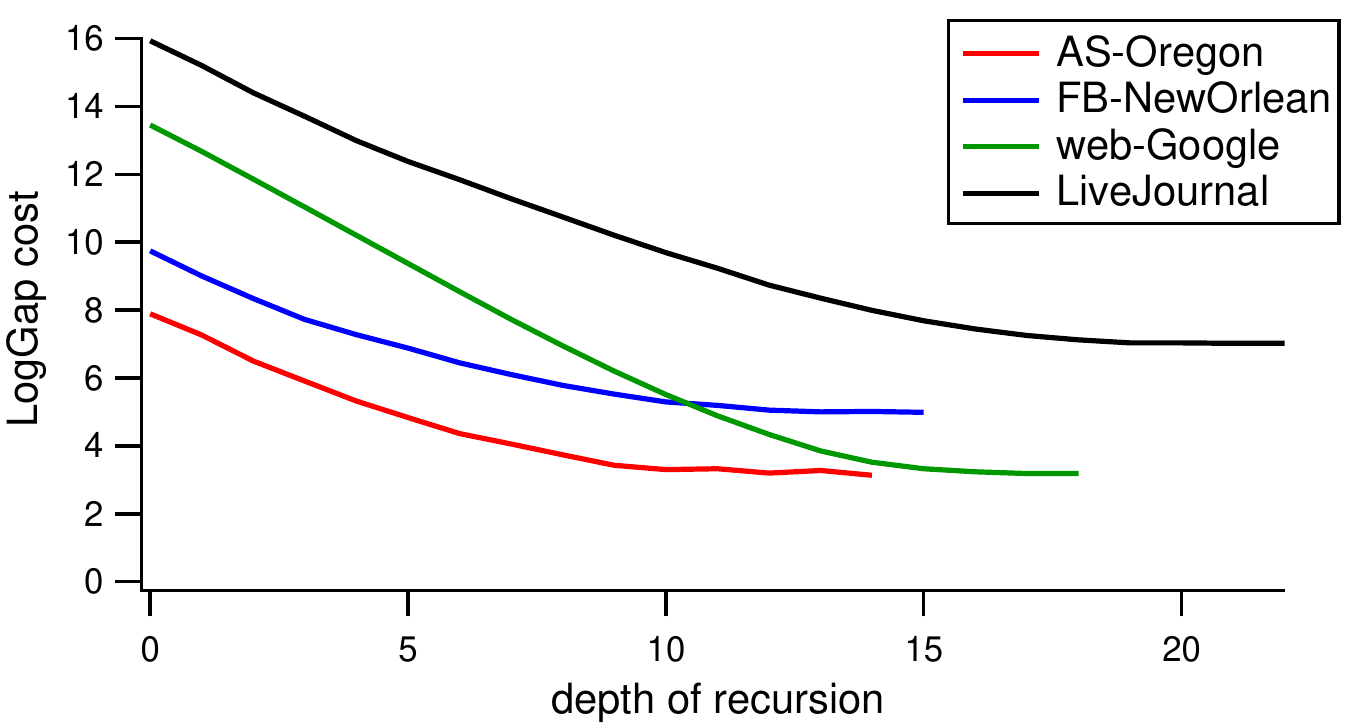}
    \caption{$\LogGap$ cost of the resulting order produced with a
        fixed depth of recursion. Note that the last few splits
        make insignificant contributions to the final quality.}
    \label{fig:depth}
    \end{minipage}
\end{figure}

Is it always necessary to perform $\log n$ levels of recursion to get a
reasonable solution? Figure~\ref{fig:depth} shows the quality of the
resulting vertex order after a fixed number, $i$, of recursion splits.
For every $i$ (that is, when there are $2^i$ disjoint sets), we stop the
algorithm and measure the quality of the order induced by the
random assignment of vertices respecting the partition. It turns out
that graph bisection is beneficial only when $\D$ contains more than
a few tens of vertices. In our implementation, we set $(\log n - 5)$ for the
depth of recursion, which slightly reduces the overall running time.
It might be possible to improve
the final quality by finding an optimal solution (e.g., using an exhaustive search or
a linear program) for small subgraphs on the lowest levels of the recursion.
We leave the investigation for future research.

Figure~\ref{fig:iters} illustrates the speed of convergence of our optimization procedure
utilized for improving graph partitioning in Algorithm~\ref{algo:bp}.
The two sets approach a locally optimal state within a few iterations.
The number of required iterations increases, as the depth of recursion gets larger.
Generally, the number of moved vertices per iteration does not exceed $1\%$ after $20$
iterations, even for the deepest recursion levels.
Therefore, we use $20$ as the default number of iterations in all our experiments.

\begin{figure}%[htb]
    \centering
    \includegraphics[width=0.49\textwidth]{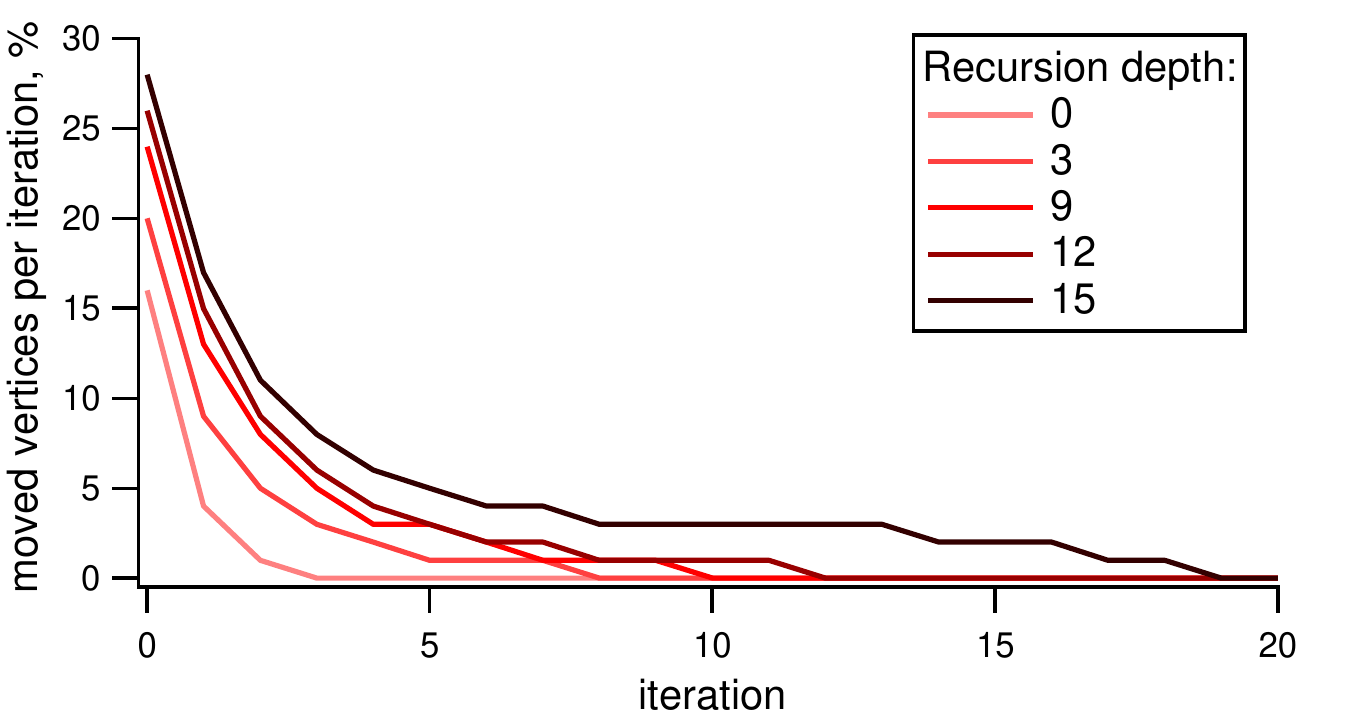}
    \caption{The average percentage of moved vertices on an iteration of
        Algorithm~\ref{algo:bp} for various levels of recursion. The data
        is computed for \texttt{LiveJournal}.}
    \label{fig:iters}
\end{figure}

%Misc: can be improved? better partitioning? better cost function?

\begin{figure}[h!]%[tbp]
    \begin{minipage}[b]{0.98\textwidth}
        \begin{subfigure}[t]{.15\textwidth}
            \centering
            \includegraphics[width=\textwidth]{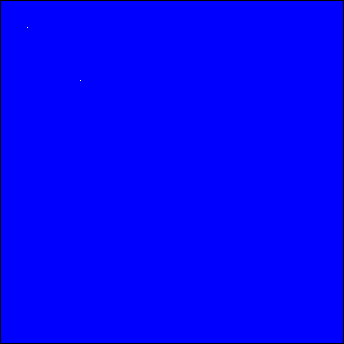}
            \caption{\texttt{Natural}}
        \end{subfigure}
        \hfill
        \begin{subfigure}[t]{.15\textwidth}
            \centering
            \includegraphics[width=\textwidth]{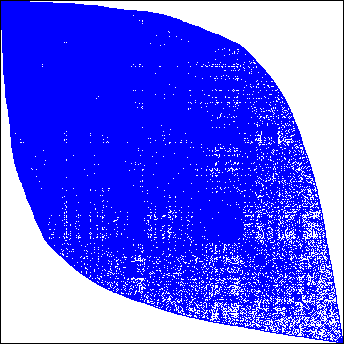}
            \caption{\texttt{BFS}}
        \end{subfigure}
        \hfill
        \begin{subfigure}[t]{.15\textwidth}
            \centering
            \includegraphics[width=\textwidth]{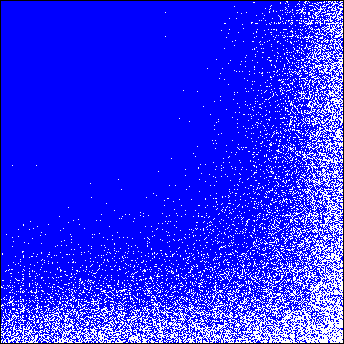}
            \caption{\texttt{Minhash}}
        \end{subfigure}
        \hfill
        \begin{subfigure}[t]{.15\textwidth}
            \centering
            \includegraphics[width=\textwidth]{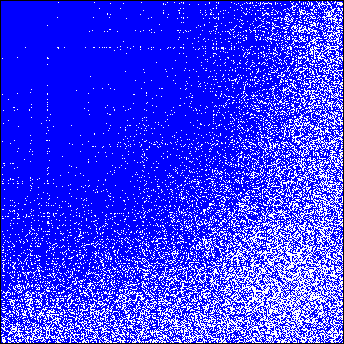}
            \caption{\texttt{TSP}}
        \end{subfigure}
        \hfill
        \begin{subfigure}[t]{.15\textwidth}
            \centering
            \includegraphics[width=\textwidth]{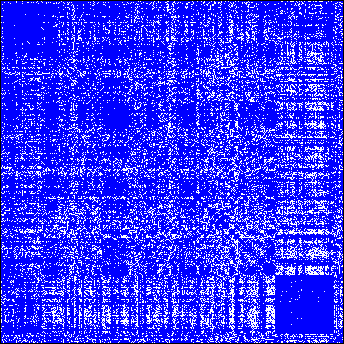}
            \caption{\texttt{LLP}}
        \end{subfigure}
        \hfill
        \\
        \begin{subfigure}[t]{.15\textwidth}
            \centering
            \includegraphics[width=\textwidth]{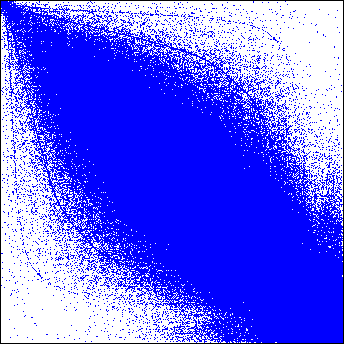}
            \caption{\texttt{Spectral}}
        \end{subfigure}
        \hfill
        \begin{subfigure}[t]{.15\textwidth}
            \centering
            \includegraphics[width=\textwidth]{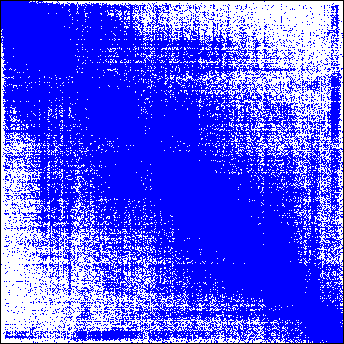}
            \caption{\texttt{Multiscale}}
        \end{subfigure}
        \hfill
        \begin{subfigure}[t]{.15\textwidth}
            \centering
            \includegraphics[width=\textwidth]{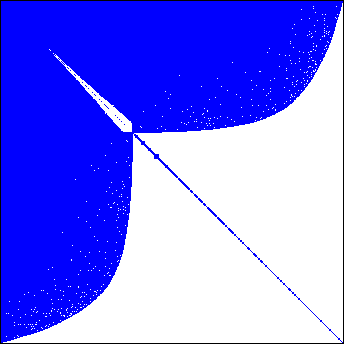}
            \caption{\texttt{SlashBurn}}
        \end{subfigure}
        \hfill
        \begin{subfigure}[t]{.15\textwidth}
            \centering
            \includegraphics[width=\textwidth]{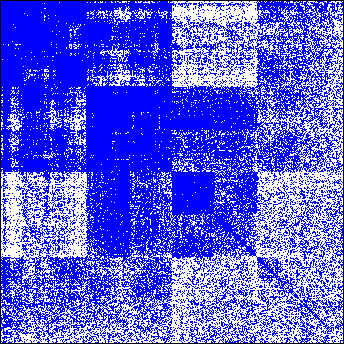}
            \caption{\texttt{BP}}
        \end{subfigure}
        \caption{Adjacency matrices of \texttt{FB-NewOrlean} after applying
            various reordering algorithms; nonzero elements are blue.}
        \label{fig:heatmap}
    \end{minipage}
\end{figure}

\subsection{Compression ratio}

Table~\ref{table:or-graph} presents a comparison of various reordering
methods on social networks and web graphs. We evaluate the following measures:
(i)~the cost of the \BiMLogA problem ($\LogGap$),
(ii)~the cost of the \MLogA problem (the logarithmic difference averaged over the edges, $\Log$),
(iii)~the average number of bits per edge needed to encode the graph with WebGraph~\cite{BV04} (referred to as \BV).
The results suggest that \texttt{BP} yields the best compression on
all but one instance, providing an $5-20\%$ improvement over the best alternative.
An average gain over a non-reordered solution reaches impressive $50\%$.
The runner-up approaches, \texttt{TSP}, \texttt{LLP}, and \texttt{Multiscale},
also significantly outperform the natural order. However, their straightforward
implementations are not scalable for large graphs (none of them is able to process \texttt{Twitter}
within a few hours), while efficient implementations are arguably more complicated than \texttt{BP}.

The computed results for \texttt{FB-300M} and \texttt{FB-1B} demonstrate that
the new reordering technique is beneficial for very large graphs, too.
Unfortunately, we were not able to calculate the compression rate for the
graphs, as WebGraph~\cite{BV04} does not provide distributed implementation.
However, the experiment indicates that \texttt{BP} outperforms
\texttt{Natural} by around $50\%$ and outperforms \texttt{Minhash} by
around $30\%$.

\begin{table}%[!t]
\vspace{-0.85cm}
\small
    \centering
    \begin{tabular}{llrrr}\toprule
      \multicolumn{1}{c}{Graph} & \multicolumn{1}{c}{Algorithm} %
      & \multicolumn{1}{c}{$\LogGap$} & \multicolumn{1}{c}{$\Log$} & \multicolumn{1}{c}{\BV} \\
      \midrule
            \texttt{Enron}
            & \texttt{Natural} & $5.01$ & $9.82$ & $7.80$ \\
            & \texttt{BFS} & $4.86$ & $9.97$ & $7.70$ \\
            & \texttt{Minhash} & $4.91$ & $10.12$ & $7.68$ \\
            & \texttt{TSP} & $3.95$ & $9.46$ & $6.58$ \\
            & \texttt{LLP} & $3.96$ & $8.55$ & $6.51$ \\
            & \texttt{Spectral} & $5.43$ & $9.41$ & $8.60$ \\
            & \texttt{Multiscale} & $4.23$ & $\boldsymbol{8.00}$ & $6.90$\\
            & \texttt{SlashBurn} & $5.11$ & $10.18$ & $8.05$ \\
            & \texttt{BP} & $\boldsymbol{3.69}$ & $8.26$ & $\boldsymbol{6.24}$\\
        \midrule
            \texttt{AS-Oregon}
            & \texttt{Natural} & $7.88$ & $12.06$ & $13.34$\\
            & \texttt{BFS} & $4.71$ & $11.06$ & $7.97$ \\
            & \texttt{Minhash} & $4.47$ & $11.17$ & $7.56$ \\
            & \texttt{TSP} & $3.59$ & $10.39$ & $6.66$\\
            & \texttt{LLP} & $4.42$ & $8.32$ & $7.47$ \\
            & \texttt{Spectral} & $5.64$ & $9.53$ & $8.76$ \\
            & \texttt{Multiscale} & $4.53$ & $\boldsymbol{7.23}$ & $7.31$\\
            & \texttt{SlashBurn} & $4.50$ & $10.66$ & $8.74$ \\
            & \texttt{BP} & $\boldsymbol{3.15}$ & $9.21$ & $\boldsymbol{6.25}$\\
        \midrule
            \texttt{FB-NewOrlean}
            & \texttt{Natural} & $9.74$ & $14.29$ & $14.64$ \\
            & \texttt{BFS} & $7.16$ & $12.63$ & $10.79$ \\
            & \texttt{Minhash} & $7.06$ & $12.57$ & $10.62$ \\
            & \texttt{TSP} & $5.62$ & $11.61$ & $8.96$ \\
            & \texttt{LLP} & $5.37$ & $\boldsymbol{9.41}$ & $8.54$\\
            & \texttt{Spectral} & $7.64$ & $11.49$ & $11.79$ \\
            & \texttt{Multiscale} & $5.90$ & $9.58$ & $9.25$ \\
            & \texttt{SlashBurn} & $8.37$ & $13.06$ & $12.65$\\
            & \texttt{BP} & $\boldsymbol{4.99}$ & $9.45$ & $\boldsymbol{8.16}$ \\
        \midrule
            \texttt{web-Google}
            & \texttt{Natural} & $13.39$ & $16.74$ & $20.08$\\
            & \texttt{BFS} & $5.57$ & $11.21$ & $7.69$\\
            & \texttt{Minhash} & $5.65$ & $13.14$ & $6.87$ \\
            & \texttt{TSP} & $3.28$ & $7.99$ & $4.77$ \\
            & \texttt{LLP} & $3.75$ & $6.70$ & $5.13$ \\
            & \texttt{Spectral} & $6.68$ & $10.25$ & $9.16$ \\
            & \texttt{Multiscale} & $\boldsymbol{2.72}$ & $\boldsymbol{4.82}$ & $\boldsymbol{4.10}$ \\
            & \texttt{SlashBurn} & $8.02$ & $14.46$ & $10.29$ \\
            & \texttt{BP} & $3.17$ & $7.74$ & $4.68$ \\
      \midrule
          \texttt{LiveJournal}
          & \texttt{Natural} & $10.43$ & $17.44$ & $14.61$ \\
          & \texttt{BFS} & $10.52$ & $17.59$ & $14.69$ \\
          & \texttt{Minhash} & $10.79$ & $17.76$ & $15.07$ \\
          & \texttt{LLP} & $7.46$ & $\boldsymbol{12.25}$ & $11.12$ \\
          & \texttt{BP} & $\boldsymbol{7.03}$ & $12.79$ & $\boldsymbol{10.73}$ \\
      \midrule
          \texttt{Twitter}
          & \texttt{Natural} & $15.23$ & $23.65$ & $21.56$ \\
          & \texttt{BFS} & $12.87$ & $22.69$ & $17.99$ \\
          & \texttt{Minhash} & $10.43$ & $21.98$ & $14.76$ \\
          & \texttt{BP} & $\boldsymbol{7.91}$ & $\boldsymbol{20.50}$ & $\boldsymbol{11.62}$ \\
      \midrule
          \texttt{FB-300M}
          & \texttt{Natural} & $17.65$ & $25.34$ &  \\
          & \texttt{Minhash} & $13.06$ & $24.9$ &  \\
          & \texttt{BP} & $\boldsymbol{8.39}$ & $\boldsymbol{18.13}$ &  \\
      \midrule
          \texttt{FB-1B}
          & \texttt{Natural} & $19.63$ & $27.22$ &  \\
          & \texttt{Minhash} & $14.60$ & $26.89$ &  \\
          & \texttt{BP} & $\boldsymbol{8.66}$ & $\boldsymbol{18.36}$ &  \\
        \bottomrule
    \end{tabular}
    \caption{Reordering results of various algorithms on graphs:
        the costs of $\MLogA$, $\BiMLogA$, and the number of
        bits per edge required by \BV.
        The best results in every column are highlighted.
    We present the results that completed the computation within a few hours.}
    \label{table:or-graph}
\end{table}

\begin{table}%[!h]
\vspace{-0.5cm}
\small
    \centering
    \begin{tabular}{llrrr}\toprule
        \multicolumn{1}{c}{Index} & \multicolumn{1}{c}{Algorithm} %
        & \multicolumn{1}{c}{$\LogGap$} & \multicolumn{1}{c}{\PEF} & \multicolumn{1}{c}{\BIC} \\
        \midrule
        \texttt{Gov2}
        & \texttt{Natural} & $2.12$ & $3.12$ & $2.52$ \\
        & \texttt{BFS} & $2.07$ & $3.00$ & $2.44$ \\
        & \texttt{Minhash} & $2.12$ & $3.12$ & $2.52$ \\
        & \texttt{BP} & $\boldsymbol{1.81}$ & $\boldsymbol{2.44}$ & $\boldsymbol{1.95}$ \\
        \midrule
        \texttt{ClueWeb09}
        & \texttt{Natural} & $2.91$ & $4.99$ & $4.05$ \\
        & \texttt{BFS} & $2.91$ & $4.99$ & $4.06$ \\
        & \texttt{Minhash} & $2.91$ & $4.99$ & $4.05$ \\
        & \texttt{BP} & $\boldsymbol{2.55}$ & $\boldsymbol{4.34}$ & $\boldsymbol{3.50}$ \\
        \midrule
        \texttt{FB-Posts-1B}
        & \texttt{Natural} & $8.03$ & $10.19$ & $9.95$ \\
        & \texttt{Minhash} & $3.41$ & $4.96$ & $4.24$ \\
        & \texttt{BP} & $\boldsymbol{2.95}$ & $\boldsymbol{4.18}$ & $\boldsymbol{3.61}$ \\
        \bottomrule
    \end{tabular}
    \caption{Reordering results of various algorithms on inverted indexes with highlighted best
        results.}
    \label{table:or-index}
\end{table}

The compression ratio of inverted indexes is illustrated in Table~\ref{table:or-index},
where we evaluate the Partitioned Elias-Fano~\cite{OV14} encoding and Binary
Interpolative Coding~\cite{moffat00binary} (respectively \PEF and \BIC). Here the
results are reported in average bits per edge. Again, our new algorithm largely
outperforms existing approaches in terms of both $\LogGap$ cost and compression rate.
\texttt{BP} has a large impact on the indexes, achieving a $22\%$ and a $15\%$
compression improvement over alternatives; these gains are almost identical for \PEF and \BIC.

An interesting question is why does the new algorithm perform best on most of the
tested graphs. In Figure~\ref{fig:gaps} we analyze the number of gaps between
consecutive numbers of graph adjacency lists. It turns out that \texttt{BP} and
\texttt{LLP} have quite similar gap distributions, having notably more
shorter gaps than the alternative methods. Note that the number of
edges that the \BV encoding is able to copy is related to the number of
consecutive integers in the adjacency lists; hence short gaps strongly influences its performance.
At the same time, \texttt{BP} is slightly better
at longer gaps, which is a reason why the new algorithm yields a higher
compression ratio.

\begin{figure*}%[tbp]
    \begin{minipage}[b]{0.98\textwidth}
        \begin{subfigure}[t]{.49\textwidth}
            \centering
            \includegraphics[width=\textwidth]{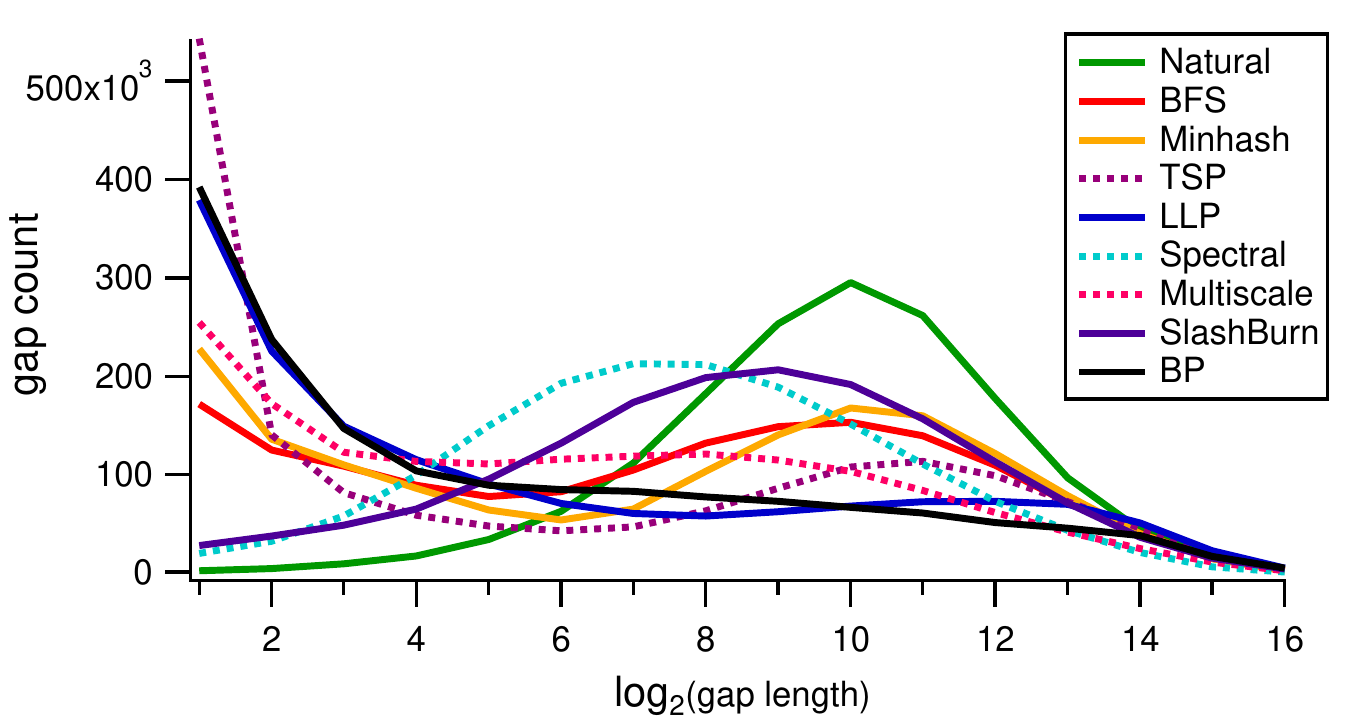}
            \caption{\texttt{FB-NewOrlean}}
        \end{subfigure}
        \hfill
        \begin{subfigure}[t]{.49\textwidth}
            \centering
            \includegraphics[width=\textwidth]{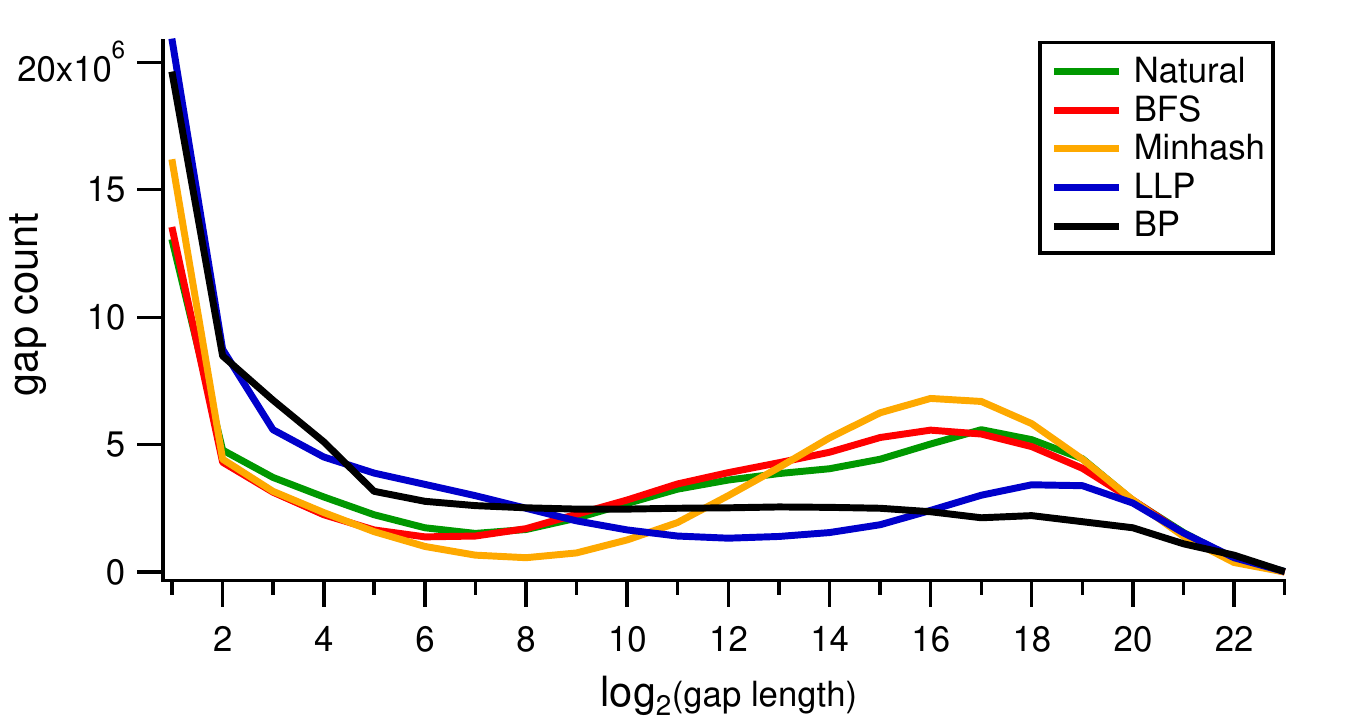}
            \caption{\texttt{LiveJournal}}
        \end{subfigure}
    \end{minipage}
    \caption{Distribution of gaps between consecutive elements of graph adjacency
      lists induced by various algorithms.}
    \label{fig:gaps}
\end{figure*}

We point out that the cost of \BiMLogA, $\LogGap$, is more relevant for the compression rate than
the cost of \MLogA; see Figure~\ref{fig:corr}. The observation agrees with the previous evaluation
of Boldi et al.~\cite{BRSV11} and motivates our research on the former problem.
The Pearson correlation
coefficients between the $\LogGap$ cost and the average number of bit per edge using
\BV, \PEF, and \BIC encoding schemes are $0.9853$, $0.8487$, and $0.8436$, respectively.
While the high correlation between $\LogGap$ and \BV is observed earlier~\cite{BV04,CKLMPR09},
the relation between $\LogGap$ and \PEF or \BIC is a new phenomenon.
A possible explanation is that the schemes encode a sequence of $k$ integers
in the range $[1..n]$ using close to the information-theoretic minimum of
$k(1+\lfloor \log_2(n/k) \rfloor)$ bits~\cite{OV14}, which is equivalent to our optimization
function utilized in Algorithm~\ref{algo:bp}.
It might be possible to construct a better model
for the two encoding schemes, where the cost of the optimization
problem has a higher correlation with the final compression ratio. For example, this can be
achieved by increasing the weights of ``short'' gaps that are generally require more than
$\log(\gap)$ bits. We leave the question for future investigation.

\begin{figure}%[htb]
    \centering
    \includegraphics[width=0.49\textwidth]{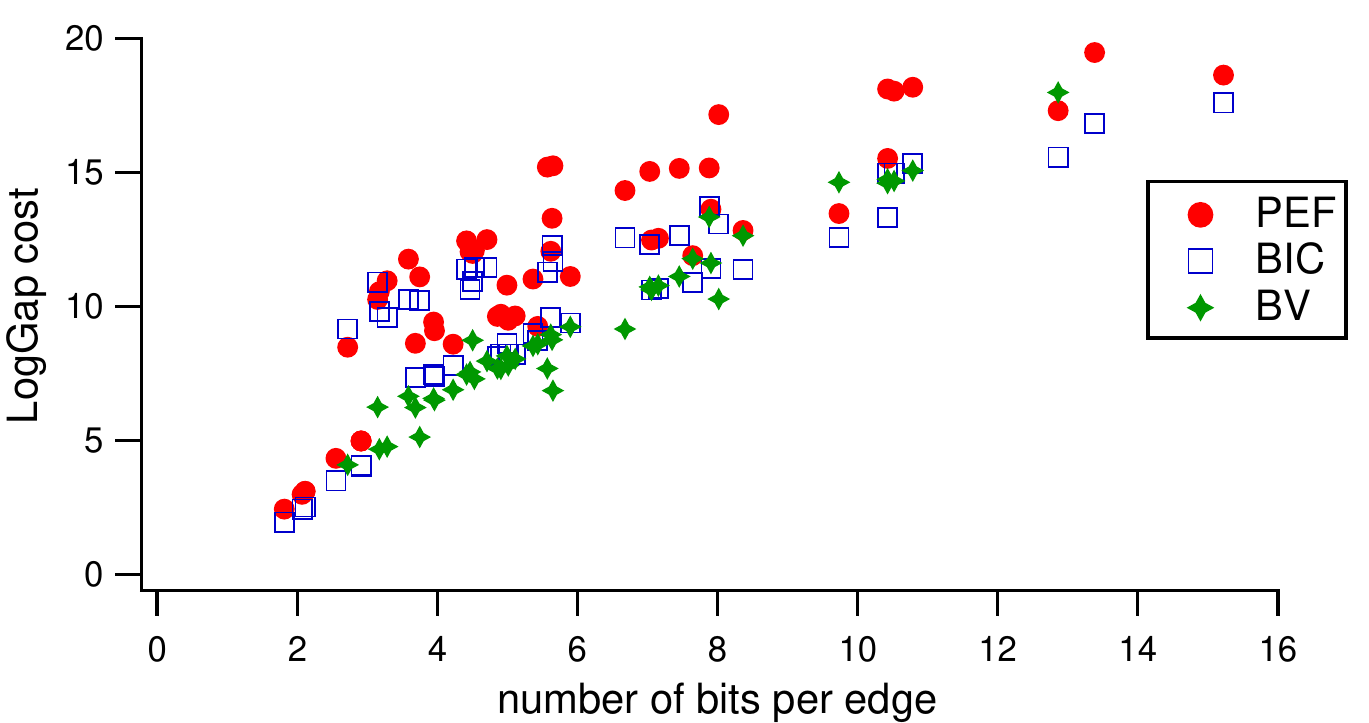}
    \caption{$\LogGap$ cost against the average number of bits per edge using various encoding
        schemes.}
    \label{fig:corr}
\end{figure}

Figure~\ref{fig:heatmap} presents an alternative comparison of the impact of the reordering
algorithms on the \texttt{FB-NewOrlean} graph. Note that only \texttt{BP} and
\texttt{LLP} are able to find communities in the graph (dense subgraphs), that can be compressed
efficiently. The recursive nature of \texttt{BP} is also clearly visible.

\subsection{Running time}
We created and tested two implementations of our algorithm,
parallel and distributed. %\footnote{The source code is not released due to the corporate restrictions.}.
The parallel version is implemented
in C++11 and compiled with the highest optimization settings.
The tests are performed on a machine with
Intel(R) Xeon(R) CPU E5-2660 @ 2.20GHz (32 cores) with 128GB RAM.
Our algorithm is highly scalable; the largest instances of our dataset,
\texttt{Gov2}, \texttt{ClueWeb09} and \texttt{FB-Posts-1B}, are processed
with \texttt{BP} within $29$, $129$, and $163$ minutes, respectively.
In contrast, even the simplest \texttt{Minhash}
takes $14$, $42$, and $70$ minutes for the indexes. \texttt{Natural} and \texttt{BFS}
also have comparable running times on the graphs.
Our largest graphs, \texttt{Twitter} and \texttt{LiveJournal}, require
$149$ and $3$ minutes; all the smaller graphs are processed within a few
seconds. In comparison, the author's implementation of \texttt{LLP} with the
default settings takes $23$ minutes on \texttt{LiveJournal} and is not able to
process \texttt{Twitter} within a reasonable time.
The other alternative methods are less efficient; for instance,
\texttt{Multiscale} runs $12$ minutes and
\texttt{TSP} runs $3$ minutes on \texttt{web-Google}.
The single-machine implementation of \texttt{BP} is also memory-efficient,
utilizing less than twice the space required to store the graph edges.

The distributed version of \texttt{BP} is implemented in Java. We run experiments using the
distributed implementation only on \texttt{FB-300M} and \texttt{FB-1B} graphs, using
a cluster of a few tens of machines. \texttt{FB-300M} is processed
within $350$ machine-hours, while the computation on \texttt{FB-1B} takes
around $2800$ machine-hours. In comparison, the running time of the \texttt{Minhash}
algorithm is $20$ and $60$ machine-hours on the same cluster configuration,
respectively. Despite the fact that our implementation is a part of a
general graph partitioning framework~\cite{SKKSP16}, which is not specifically optimized for the problem,
\texttt{BP} scales almost linearly with the size of the utilized cluster and processes huge graphs within a few hours.

%Gov2 -- 29 min
%Clueweb -- 129 min
%minhash on the indexes:
%    Gov2 -- 14 min
%    Clueweb -- 42
%LJ -- 3 min
%Twitter -- 149

%FB-300M: 8h (5+3)
%FB-1B: ??
%FB-300M, minhash: ??
%FB-1B, minhash: ??

%LLP on LJ --
%multiscale on web-Google -- 12 min
%TSP on web-Google -- 2 min

\section{Conclusions and Future Work}
\label{sect:future}
We presented a new theoretically sound algorithm for graph reordering problem
and experimentally proved that the resulting vertex orders allow to
compress graphs and indexes more efficiently than the existing
approaches. The method is highly scalable, which is demonstrated via
evaluation on several graphs with billions of vertices and edges.
While we see impressive gains in the compression
ratio, we believe there is still much room for further improvement.
In particular, our graph bisection technique ignore the
freedom of \df{orienting} the decomposition tree. An interesting question
is whether a postprocessing step that ``flips'' left and right
children of tree nodes can be helpful. It is shown in~\cite{BEFN01}
that there is an $\Oh(n^{2.2})$-time algorithm that computes an optimal
tree orientation for the \MLA problem. Whether there exists a similar
algorithm for \MLogA or \BiMLogA, is open.

While our primary motivation is compression, graph reordering plays an important
role in a number of applications. In particular,
various graph traversal algorithms can be accelerated if the in-memory graph layout
takes advantage of the cache architecture. Improving vertex and edge locality is
important for fast node/link access operations, and thus can be beneficial
for generic graph algorithms and applications~\cite{SDB15}. %~\cite{PDL11,Zhou10,SH04,HT06}.
We are currently working on exploring this area and investigating how
reordering of graph vertices can improve cache and memory utilization.

From the theoretical point of view, it is interesting to devise better
approximation algorithms for the \MLogA and \BiMLogA problems.

\section{Acknowledgments}
We thank Yaroslav Akhremtsev, Mayank Pundir, and Arun Sharma for fruitful discussions of the problem.
We thank Ilya Safro for the help with running experiments with the \texttt{Multiscale} method.

%\newpage
\balance
\bibliographystyle{abbrv}
\bibliography{paper}

\end{document}